\documentclass[a4paper,UKenglish]{article}

\usepackage{microtype}

\usepackage{fullpage}
\usepackage{amsmath}
\usepackage{amssymb}
\usepackage{algorithm}
\usepackage{algorithmic}
\usepackage{authblk}
\usepackage{amsthm}
\usepackage{graphics}
\usepackage{graphicx}
\usepackage{hyperref}
\usepackage{wrapfig}
\usepackage{framed}
\usepackage{caption}
\usepackage{subcaption}

\newcommand{\Order}{\mathrm{O}}

\newcommand{\polylog}{\mathop{\mathrm{polylog}}\nolimits}
\newcommand{\poly}{\mathop{\mathrm{poly}}\nolimits}

\newcommand{\Exp}[1]{\mathbb{E}[#1]}

\renewcommand{\Pr}{\mathbb{P}}

\DeclareMathOperator*{\argmax}{arg\,max}

\theoremstyle{definition}

\newtheorem{theorem}{Theorem}

\newtheorem{lemma}{Lemma}

\date{} 
\usepackage{tikz}
\usetikzlibrary{positioning,chains,fit,shapes,calc,decorations.pathreplacing,graphs}

\usepackage{wrapfig}
\usepackage{framed}

\newcommand\clique{
    \begin{tikzpicture}
        \foreach \x in {1,...,8}{
            \pgfmathparse{(\x-1)*45}
            \node[draw,fill=black,circle,inner sep=1pt] (v\x) at (\pgfmathresult:10pt) {};
        }
        \foreach \x [count=\xi from 1] in {1,...,8}{
            \foreach \y in {\x,...,8}{
                \path (v\xi) edge[-] (v\y);
            }
        }
    \end{tikzpicture}
}

\newcommand\anticlique{
    \begin{tikzpicture}
        \foreach \x in {1,...,6}{
            \pgfmathparse{(\x-1)*60}
            \node[draw,fill=black,circle,inner sep=1pt] (v\x) at (\pgfmathresult:6pt) {};
        }
    \end{tikzpicture}
}


\bibliographystyle{plainurl}

\title{Independent Set Size Approximation in Graph Streams \footnote{
    The work of GC is supported in part by
    European Research Council grant ERC-2014-CoG 647557,
    The Alan Turing Institute under EPSRC grant EP/N510129/1
    the Yahoo Faculty Research and Engagement Program and a Royal
    Society Wolfson Research Merit Award; JD is supported by a
    Microsoft Research Studentship; and CK by EPSRC grant  EP/N011163/1.}}

\author{Graham Cormode}
\author{Jacques Dark}
\author{Christian Konrad}
\affil{Department of Computer Science, Centre for Discrete Mathematics and its Applications (DIMAP), University of Warwick, Coventry, UK \\
  \texttt{\{g.cormode,j.dark,c.konrad\}@warwick.ac.uk}}



\begin{document}
\allowdisplaybreaks
\maketitle

\begin{abstract}
We study the problem of estimating the size of independent sets in
a graph $G$ defined by a stream of edges.
Our approach relies on the Caro-Wei bound, which expresses the desired
quantity in terms of a sum over nodes of the reciprocal of their
degrees, denoted by $\beta(G)$.
Our results show that $\beta(G)$ can be approximated accurately,
based on a provided lower bound on $\beta$.
Stronger results are possible when the edges are promised to arrive
grouped by an incident node.
In this setting, we obtain a value that is at most a logarithmic factor below
the true value of $\beta$ and no more than the true independent set size.
To justify the form of this bound, we also show an $\Omega(n/\beta)$ lower bound on any
algorithm that approximates $\beta$ up to a constant factor.
\end{abstract}

\section{Introduction}

For very large graphs, the model of streaming graph analysis, where
edges are observed one by one, is a useful lens.
Here, we assume that the graph of interest is too large to store in
full, but some representative summary is maintained incrementally.
We seek to understand how well different problems can be solved in
this model, in terms of the size of the summary, time taken to process
each edge and answer a query, and the accuracy of any approximation
obtained.
Variants arise in the model depending on whether edges can also be
removed as well as added, or if edges arrived grouped in some order,
and so on.

We study questions pertaining to independent sets within graphs.
Independent sets play a fundamental role in graph theory, and have
many applications in optimization and scheduling problems.
Given a graph, an independent set is a set of nodes such that there is no edge
between any pair.
There are conceptual links with matchings in graphs, since the dual
problem (find a set of edges such that no pair shares a node in
common) encodes the matching problem.
However, while matching permits efficient algorithms in the offline
setting, for independent set, the maximization problem is NP-hard, and
remains hard to approximate within $n^{1-\epsilon}$ for any
$\epsilon$.

The matching problem has received significant interest in the
streaming setting, and a large number of approximation algorithms are
known, with variations based on number of passes over the input data,
and whether the edges are weighted or unweighted.
Yet Independent Set is much less well understood.
In this paper, we provide algorithms and lower bounds that
characterize how well we can approximate the independent set problem
in the data stream model.
We focus on the cardinality version of the problem: the objective is
to output an estimate of the independent set size.
The size of the independent set can be linear in the number of nodes,
while we show that in some cases its cardinality can be estimated in
polylogarithmic space.

Our results rely on a combinatorial characterization of the
independent set size in terms of the degrees of nodes.
This reduces the focus to approximating a simple to describe function
(denoted as $\beta$),
yet this is still challenging in the streaming model.
Indeed, this $\beta$ function is hard to approximate when applied to an
arbitrary input sequence.
However we show that we can obtain good
approximations to $\beta$ when the input sequence corresponds to a graph.
We further distinguish the cases based on ordering in the input.
For arbitrary arrival orders, we obtain an approximation algorithm
for $\beta$, where the space reduces based on an assumed lower bound
on $\beta$ (which can be obtained based on average degree, for
example).
We then show that in the vertex arrival order, where each node arrives
along with all edges to nodes already in the graph, we can do much
better: the space cost drops to polylogarithmic, albeit for a
slightly different notion of approximation.
Our lower bounds characterize the minimum space cost necessary for any
algorithm which follows this approach, and help to
explain why some stronger approximation results are not possible in
either model.

We proceed as follows.  First, we provide necessary definitions and
notation; this allows us to state our results more formally (Section~\ref{sec:prelims}).
After surveying related work, we present our algorithms for the two
arrival models -- arbitrary edge arrivals (Section~\ref{sec:algs-edge}), and vertex-grouped edges
arrival (vertex arrival for short, Section~\ref{sec:algs-vertex}).
Last, we show space lower bounds for this problem based on a reduction
to a hard problem in communication complexity (Section~\ref{sec:lb}).

 \section{Definitions and Statement of Results}
\label{sec:prelims}
The Independent Set problem is most naturally modeled as a problem
over graphs $G = (V,E)$.
A set $U \subseteq V$ is an independent set if for all pairs $u, w \in
U$ we have $\{u, w\} \not\in E$, i.e. there is no edge between $u$ and
$w$.

We consider graphs defined by streams of edges.
That is, we observe a sequence of unordered pairs
$\{u, w\}$ which collectively define the (current) edge set $E$.
We do not require $V$ to be given explicitly, but take it to be defined
implicitly as the union of all nodes observed in the stream.
In the (arbitrary, possibly adversarial) edge arrival model, no
further constraints are placed on the order in which the edges
arrive.
In the vertex arrival model, there is a total ordering on the
vertices $\prec$ which is revealed incrementally.
Given the final graph $G$,
node $v$ ``arrives'' so that all edges
$\{u, v\} \in E$ such that $u \prec v$ are presented sequentially before the
next vertex arrives.
We do not assume that there is any further ordering among this group
of edges.

  Let $\alpha(G)$ be the {\em independence number} of graph $G$, i.e., the size of a maximum independent set in $G$.
 Let $\beta(G) = \sum_{v \in V(H)} \frac{1}{\deg_G(v) + 1}$ denote the Caro-Wei bound.
 It is well-known that $\alpha(G) \ge \beta(G)$, for every graph $G$~\cite{c79,w81}.
Our results focus on the problem of approximating $\beta(G)$ for
graphs presented as streams of edges.
We show the following three main results:

 \begin{enumerate}
 \item
In the Vertex Arrival Order model, we exhibit a one-pass randomized streaming algorithm that w.h.p. computes a value
  $\beta'$ such that $\beta' = \Omega(\beta(G) / \log n)$ and $\beta'
\le \alpha(G)$ using space $\Order(\log^3 n)$ bits.

\item A lower bound of $\Omega(\frac{n}{\beta(G) c^2})$ for
    computing a $c$-approximation to $\beta(G)$.
    The lower bound holds for the vertex arrival order and hence also
    in the (weaker) edge arrival order.

\item In the (adversarial) Edge Arrival Order, we present a one-pass randomized streaming algorithm that with high probability computes a
  $\phi$-approximation to $\beta(G)$ using space $\Order(\frac{n
  \polylog n}{\gamma \phi^2})$, where $\gamma$ is an arbitrary lower bound on $\beta(G)$.
  We also show a version of this algorithm which gives a $(1+\epsilon)$-approximation using
  $\Order(\frac{n \polylog n}{\gamma \poly(\epsilon)})$ space. Quantity $\beta(G)$ is
  bounded from below by $\frac{n}{\overline{d}+1}$, where $\overline{d}$ is the average
  degree of the input graph. Using $\gamma = \frac{n}{\overline{d}+1}$, the space of our algorithm
  becomes poly-logarithmic in $n$ for graphs of constant average degree, such as planar graphs or bounded
  arboricity graphs.

 \end{enumerate}

 \section{Related Work}
\label{sec:related}
There has been substantial interest in the topic of streaming
algorithms for graphs in the last two decades.
Indeed, the introduction of the streaming model focused on problems
over graphs~\cite{Henzinger:Raghavan:Rajagopalan:98}.
McGregor provides a survey that outlines key results on well-studied
problems such as finding sparsifiers, identifying connectivity
structure, and building spanning trees and matchings~\cite{McGregor:14}.
We expand on results related to matchings,
due to the similarity in problem statement.
For the unweighted case, the trivial greedy algorithm achieves a
maximal matching which is a 2-approximation to the size of the maximum
matching~\cite{Feigenbaum:Kannan:McGregor:Suri:Zhang:04}.
In the weighted case, a sequence of results have improved the
constant of approximation for this problem.
Most recently, a $(2+\epsilon)$ factor approximation was
presented~\cite{Paz:Schwartzman:17}.
In tandem with this line of work there has been a line of work
that seeks to approximate the cardinality of the maximum (unweighted) matching in
the stream.
This requires a distinct set of techniques.
Here, results have been recently shown by Assadi et
al.~\cite{Assadi:Khanna:Li:17}. For the dynamic version of the
problem (edges can be removed and inserted), they give an algorithm with a cost proportional
to $n^2/\alpha^4$, where $n$ denotes the number of vertices and
$\alpha$ the quality of the approximation, and
for the insert-only case, for arbitrary graphs, they give an algorithm
with space cost proportional to $n/\alpha^2$.
When the graph is sparse, characterized by having arboricity at most
$c$, a sampling-based algorithm can achieve an exponential improvement
in the space cost of $O(c \log^2 n)$ in order to
provide an approximation factor proportional to $c$~\cite{Cormode:Jowhari:Monemizadeh:Muthukrishnan:16}.
Our aim in this paper is to provide similar guarantees for estimating
the cardinality of independent sets.

Computing a maximum independent set is NP-hard on general
graphs \cite{k72} and is even hard to approximate within factor $n^{1-\epsilon}$, for any
$\epsilon > 0$ \cite{Hastad99,Zuckerman07}.
For this reason, often either special graph classes are considered
that admit reasonable approximations, or a different quality measure on the size of independent sets is used.
The Tur\'{a}n bound \cite{t48} implies that every graph has an independent set of size $n / (\overline{d}+1)$, where $\overline{d}$ is the average
degree of the input graph. Caro \cite{c79} and Wei \cite{w81}
improved this bound independently to $\beta(G)$.
The quantity $\beta(G)$ is an attractive bound on the size of a maximum independent
set since it is given by the degree sequence alone of a graph.
It is known that a simple greedy algorithm for maximum independent
sets computes an independent set of size at least $\beta(G)$ \cite{w81,griggs83}.
The algorithm iteratively picks a node of minimum degree, and removes
all neighbors from consideration --- note that this cannot be
simulated in the streaming model with small space.
There are parallels to other graph problems: for example, the minimum
vertex coloring problem is also NP-hard and hard to approximate within a factor of
$n^{1-\epsilon}$.
There is however a huge interest in computing $\Delta+1$-colorings, which is a quality bound also given by the degrees of the input graph.

It is known that the Caro-Wei bound gives polylogarithmic
approximation guarantees on graphs  which are of {\em polynomially
  bounded-independence}~\cite{hk15}, which means (informally) that the
size of the independent set in $r$-neighborhood around a node is
bounded in size by a polynomial in $r$.
This graph class includes unit interval and unit disc graphs.
The problem of finding independent sets themselves in the streaming model
has received some recent attention.
Halld\'{o}rsson et al. showed that an independent set of expected size $\beta(G)$
can be computed in the edge arrival model using $\Order(n \log n)$ space \cite{hhls16}.
The streaming independent set problem has been studied on interval graphs: In this model, the intervals arrive one-by-one.
The goal is to compute an independent set of intervals. There is an
algorithm that computes a $2$-approximation on general interval graphs
and a $1.5$-approximation on unit interval graphs which uses space
linear in the size of the computed independent set \cite{ehr16}.
Cabello and P{\'e}rez-Lantero gave polylogarithmic space streaming algorithms,
which approximate the size of independent sets of intervals \cite{Cabello2015}.


Our work is concerned with streaming approximations of the size of the maximum
independent set.
In \cite{hssw12}, very strong space lower bounds on approximating the size of a maximum independent set are given: Every $c$-approximation
  algorithm requires $\tilde{\Omega}(\frac{n^2}{c^2})$ space (which
  can also be achieved by sampling an induced subgraph and computing a maximum independent set in it using exponential time).
This strong lower bound provides a strong motivation for considering
related measures such as $\beta(G)$ instead.
Approximating $\beta(G)$ is essentially the same as approximating the
$-1$ negative frequency moment or harmonic mean of a frequency vector
derived from the graph stream.
This approach has been addressed via sampling approaches in the
property testing literature~\cite{grs11,ers16,abgpry16}, but has
received less attention from the perspective of streaming algorithms.
Braverman and Chestnut studied the problem of approximating the
negative frequency moments \cite{bc15} for general frequency vectors.
They consider only $(1+\epsilon)$-approximations and relate
  the space complexity to the stream length, i.e., the total weight of
  the input stream.
Our results evade their lower bounds, since the additional constraint
of being the degree distribution of a graph limits the shape of the
derived frequency vector, and precludes the pathological cases.

\section{Algorithm in the Edge-arrival Model}
\label{sec:algs-edge}
Suppose that we are given a bound $\gamma$ such that $\gamma < \beta(G)$.
We first give an algorithm with space
$\Order(\frac{n \log^3 n}{\epsilon ^4 \gamma})$ which approximates $\beta(G)$ within a factor of $1+\epsilon$ with high probability.
We then show how this algorithm can be turned into a $\phi$-approximation algorithm with space $\Order( \frac{n \log^3 n}{\gamma \phi^2})$.
Our algorithm and its analysis rely heavily on the use of degree
classes, which we introduce next.

\subsection{Degree Classes}

Let $c > 1$ be a real number used to define geometrically growing classes.
We partition the vertices of $G$ into classes $V_0 \cup V_1 \cup \dots V_{\lceil \log_c n \rceil -1}$
such that $v \in V_i$ iff $c^i \le \deg_G(v) < c^{i+1}$
(we assume that the input graph is connected and thus all vertex degrees are non-zero).
We define $\beta_i(G) := \sum_{v \in V_i} \frac{1}{\deg_G(v) + 1}$ which implies
$\beta(G) = \sum_{i} \beta_i(G)$. Furthermore, let $\beta'_i(G) := \frac{|V_i|}{ c^{i+1}+1}$, which implies:

\begin{eqnarray}
\beta'_i(G) \le  \beta_i(G) \le c \beta'_i(G). \label{eqn:199}
\end{eqnarray}

Let $g > 1$ be a parameter we set subsequently to control the
approximation factor.
Let $I_1$ be the set of class indices $i$ such that $\beta_i(G) \ge \frac{\beta(G) }{\lceil \log_c n \rceil g }$, and let $I_0$ be all other indices.
We call a degree class $i$ (or $V_i$) {\em heavy}, if $i \in I_1$, otherwise it is {\em light}.
We will argue that in order to obtain a good approximation to $\beta(G)$, it is enough to approximate $\beta'_i(G)$ for every heavy degree class $i$. We have
\[\sum_{i \in I_0} \beta_i(G) \le \sum_{i \in I_0} \frac{\beta(G) }{\lceil \log_c n \rceil g } \le \frac{\beta(G)}{g},\] which implies
$\sum_{i \in I_1} \beta_i(G) \ge \beta(G) ( 1 - \frac{1}{g})$. Furthermore, we obtain
\begin{eqnarray}
\sum_{i \in I_1} \beta_i(G) \le \beta(G) \le \frac{g}{g-1} \sum_{i \in I_1} \beta_i(G),   \label{eqn:900}
\end{eqnarray}
and by plugging Inequality~\ref{eqn:199} into Inequality~\ref{eqn:900}, we conclude
\begin{eqnarray}
 \sum_{i \in I_1} \beta'_i(G) \le \beta(G) \le \frac{g c}{g-1} \sum_{i \in I_1} \beta_i'(G) . \label{eqn:091}
\end{eqnarray}
Last observe that for $i \in I_1$, we have
$$
 \frac{\beta(G) }{\lceil \log_c n \rceil g } \le \beta_i(G) = \sum_{v \in V_i} \frac{1}{\deg_G(v) + 1} \le  \frac{|V_i|}{c^i + 1},
$$
which implies $|V_i| \ge \frac{\beta(G) }{\lceil \log_c n \rceil g }$, i.e., we establish a lower bound on the size of every heavy degree class.

\subsection{A $(1+\epsilon)$-approximation Algorithm}
Algorithm~\ref{alg:3} takes a uniform random sample $S$ of the vertices of the input graph and
maintains the degrees of these vertices while processing the stream. In a post-processing step,
vertices of $S$ are partitioned into degree classes $(S_i)_{i}$ in the same way
$V$ was partitioned in the previous subsection. Large enough sets $S_i$ then contribute
to our estimate for $\beta(G)$: By adjusting the parameters correctly, we ensure that heavy
degree classes $V_i$ give large samples $S_i$ with high probability, and we can accurately
estimate $|V_i|$ via $|S_i|$. Then, via Inequality~\ref{eqn:091}, this gives an accurate estimate of $\beta(G)$.

The main analysis of Algortihm~\ref{alg:3} is conducted in Lemma~\ref{lem:sampling-alg}, which gives approximation
and space bounds depending on parameters $\delta, g$ and $c$. In Theorem~\ref{thm:edge-arrival-gamma}, we optimize these parameters
so that space is minimized for obtaining a $(1+\epsilon)$-approximation. Last, in Theorem~\ref{thm:edge-arrival-phi}, we show how this
algorithm can be used to obtain a $\phi$-approximation, for an arbitrary value of $\phi$.
\begin{algorithm}[t]
 \begin{algorithmic}[1]
  \REQUIRE real value $\delta > 0$, real value $c > 1$, real value $g > 1$, $\gamma \le \beta(G)$
  \STATE $C \gets \frac{24}{\delta^2}$, $\displaystyle v_0 \gets \frac{\gamma}{\lceil \log_c n \rceil g}$, $\displaystyle p \gets \frac{C \log n}{v_0}$
  \STATE $S \gets $ subset of vertices obtained by sampling every vertex u.a.r. with probability $p$
  \WHILE{Processing the stream}
   \STATE For every $v \in S$: Compute degree $\deg_G(v)$
  \ENDWHILE
  \STATE Post-processing:
  \STATE $S_i \gets $ subset of $S$ of vertices $v$ with $c^i \le \deg_G(v) < c^{i+1}$
  \STATE $\beta' \gets 0$
  \FOR{$i = 0 \dots \lceil \frac{\log n}{\log c} \rceil -1$}
   \IF{$|S_i| \ge v_0 p / (1+\delta) $} \label{line:211}
    \STATE $\beta' \gets \beta' + \frac{|S_i|}{(c^{i+1}+1) p}$
   \ENDIF
  \ENDFOR
  \RETURN $\beta'$
 \end{algorithmic}
 \caption{Sampling based algorithm \label{alg:3} }
\end{algorithm}

\begin{lemma} \label{lem:sampling-alg}
  Let $\delta > 0$.
  If a value $\gamma \le \beta(G)$ is given to the algorithm, then
 Algorithm~\ref{alg:3} is a randomized one-pass streaming algorithm with space
 $\Order(\frac{n \log^3(n) g}{\gamma \delta^2 \log c})$ in the edge
 arrival model.
 With high probability it outputs a value $\beta'$ such that
 $$
  \frac{1}{1+ \delta} \sum_{i \in I_1} \beta'_i(G) \le \beta' \le (1+\delta) \beta(G) .
 $$
 If $\gamma > \beta(G)$, then the upper bound $\beta' \le (1+\delta) \beta(G)$ still holds w.h.p.
\end{lemma}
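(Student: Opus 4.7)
The plan is to tie the entire analysis to a single high-probability event $\mathcal{E}$ that controls the sample sizes $|S_i|$ in every degree class simultaneously; the three conclusions (space, upper bound on $\beta'$, lower bound on $\beta'$) then follow by bookkeeping. The space bound is immediate: $|S|$ is $\mathrm{Binomial}(n,p)$ with mean $np$, so $|S|=O(np)$ w.h.p.\ by Chernoff, and each sampled vertex stores only an $O(\log n)$-bit degree counter. Unrolling $p=C\log n/v_0$ with $C=24/\delta^2$ and $v_0 = \gamma/(\lceil\log_c n\rceil g)$, and using $\log_c n = \log n/\log c$, yields the stated $O(n\log^3(n)\,g/(\gamma\delta^2\log c))$ bits, regardless of whether $\gamma\le\beta(G)$.

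I would define $\mathcal{E}$ as the event that, for every degree class $i$: (i) if $|V_i|\ge v_0/(1+\delta)^2$, then $|S_i|\le (1+\delta)|V_i|p$; (ii) if $|V_i|< v_0/(1+\delta)^2$, then $|S_i|< v_0 p/(1+\delta)$, so class $i$ does not trigger; and (iii) if $|V_i|\ge v_0$, then $|S_i|\ge |V_i|p/(1+\delta)$ (Chernoff lower tail at parameter $\delta/(1+\delta)$, which is what produces the $1/(1+\delta)$ in the lemma statement). Each $|S_i|$ is a sum of independent $\mathrm{Ber}(p)$ variables with mean $\mu_i=|V_i|p$; the choice $C=24/\delta^2$ (so $v_0 p=24\log n/\delta^2$) makes every individual failure probability $n^{-\Omega(1)}$, and a union bound over the $O(\log_c n)$ classes keeps $\Pr[\mathcal{E}]=1-o(1)$. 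Conditional on $\mathcal{E}$, the upper bound is immediate: every triggering class must have $|V_i|\ge v_0/(1+\delta)^2$ by (ii) and $|S_i|\le (1+\delta)|V_i|p$ by (i), so it contributes at most $(1+\delta)\beta'_i(G)$; summing and applying Inequality~\ref{eqn:199} gives $\beta'\le (1+\delta)\sum_i\beta_i(G)=(1+\delta)\beta(G)$. Since this derivation never uses $\gamma\le\beta(G)$, the second assertion of the lemma follows at once. For the lower bound, every heavy class $i\in I_1$ satisfies $|V_i|\ge v_0$ by the closing inequality of Section~2.1 (this step does need $\gamma\le\beta(G)$), so by clause (iii) class $i$ triggers with contribution at least $\beta'_i(G)/(1+\delta)$, and summing over $I_1$ finishes.

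The main obstacle is clause (ii) of $\mathcal{E}$ when $|V_i|$ is so small that $\mu_i\ll\log n$ and the usual multiplicative Chernoff bound $\exp(-\delta^2\mu_i/3)$ is vacuous. I would handle this with the sharper form $\Pr(|S_i|\ge(1+\delta')\mu_i)\le\exp\!\bigl(-\mu_i\bigl[(1+\delta')\ln(1+\delta')-\delta'\bigr]\bigr)$ applied with $(1+\delta')\mu_i = v_0 p/(1+\delta) = \Theta(\log n)$: as $|V_i|$ shrinks below $v_0/(1+\delta)^2$ the deviation parameter $\delta'$ grows and the exponent remains $\Omega(\log n)$ uniformly, so the trigger probability stays polynomially small across all such classes. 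Verifying that the single constant $C=24/\delta^2$ is large enough to make clauses (i)--(iii) succeed simultaneously is the bulk of the calculation.
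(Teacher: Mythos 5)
Your proposal is correct and follows essentially the same approach as the paper: partition into degree classes, prove simultaneous Chernoff concentration of the sample sizes $|S_i|$ via a union bound over the $O(\log_c n)$ classes, observe that the trigger condition is passed by heavy classes and fails for classes with $|V_i| < v_0/(1+\delta)^2$, and note that only the lower bound invokes $\gamma \le \beta(G)$ (through the paper's fact that heavy classes have $|V_i| \ge v_0$). Your packaging of the three tail bounds into a single event $\mathcal{E}$ and your explicit treatment of the very-small-$|V_i|$ case via the sharp form of the Chernoff upper tail (rather than the paper's implicit appeal to monotonicity of the binomial tail in the number of trials) are modest organizational clarifications rather than a different route.
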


\begin{proof}
Suppose that $\gamma \le \beta(G)$. First, we prove that
for every $i$ with $|V_i| \ge v_0 = \frac{\gamma}{\lceil
    \log_c n\rceil g}$, the probability that the size of
set $S_i$ deviates from its expectation by more than a factor of
$1+\delta$ is small. To this end, suppose indeed that $|V_i| \ge v_0$.
Then, $\mu = \Exp{S_i} \ge v_0 p $, and
$$
\Pr \left[ \left| \left|S_i\right| - \mu\right| \ge \delta \mu \right] \le 2 \exp \left( - \frac{C \log(n) \delta^2}{2} \right) \le n^{-\frac{C \delta^2}{8}} \le n^{-2},
$$
for $C \ge \frac{16}{\delta^2}$, applying a standard Chernoff bound.
This proves that with high probability the condition in
Line~\ref{line:211} is fulfilled for every heavy degree class defined by the threshold $v_0$.

Next, suppose that $|V_i| \le v_0 / (1+\delta)^2$. Then, $\Exp{S_i}
\le v_0 p / (1+\delta)^2$, and by a similar Chernoff bound argument,
\begin{eqnarray*}
 \Pr \left[ |S_i| \ge \frac{v_0 p}{(1+\delta)} \right] & \le & \Pr \left[ |S_i| > \frac{v_0 p}{(1+\delta)^2} \cdot (1+\delta) \right] \le
 \exp \left( - \frac{\delta^2 C \log(n)}{(2+\delta) (1+\delta)^2} \right) \\
 & \le & n^{-\frac{\delta^2 C}{12}} \le n^{-2},
\end{eqnarray*}
for $C \ge \frac{24}{\delta^2}$. Thus, degree classes with fewer than
$v_0 / (1+\delta)^2$ vertices are not considered in Line~\ref{line:211} with
high probability.

Since w.h.p. degree classes with fewer than $v_0 / (1+\delta)^2$ nodes do not contribute to the output value $\beta'$ (i.e., the condition in Line~\ref{line:211}
evaluates to false), and for all degree classes $i$ with $|V_i| \ge v_0$, the size $|S_i|$ is concentrated around its mean within a factor of
$1+\delta$ w.h.p., the following lower bound on the output $\beta'$ holds w.h.p.:
\begin{eqnarray*}
 \beta' & \ge & \sum_{i \in I_1} \frac{|S_i|}{(c^{i+1} + 1)p} \ge \sum_{i \in I_1} \frac{\frac{p |V_i|}{1+\delta}}{(c^{i+1} + 1)p} =
 \sum_{i \in I_1} \frac{|V_i|}{(c^{i+1} + 1)(1+\delta)} = \frac{1}{1+\delta} \sum_{i \in I_1} \beta'_i.
\end{eqnarray*}
Furthermore, using the same argument as above, i.e., the fact that the sizes of all sets $S_i$ that contribute to $\beta'$ are concentrated
around their means within a factor of $1+\delta$, we obtain the following upper bound on the output $\beta'$:
\begin{eqnarray*}
 \beta' \le \sum_{i \, : \, |V_i| \ge v_0 / (1+\delta)^2} \frac{|S_i|}{(c^{i+1} + 1)p} \le \sum_{i \, : \, |V_i| \ge v_0 / (1+\delta)^2} \frac{p |V_i|(1+\delta)}{(c^{i+1} + 1)p} \le (1+\delta) \beta(G).
\end{eqnarray*}

This concludes the first part of the proof.

For the second part, to see that the upper bound $\beta' \le (1+\delta) \beta(G)$ still holds if $\gamma > \beta(G)$,
recall that the sizes of all sets $S_i$ that contribute to $\beta'$ are concentrated around its expected size
within a factor of $1+\delta$. Since the sampling probability becomes smaller as $\gamma$ increases,
fewer degree classes contribute to $\beta'$ and the upper bound thus equally holds.

Last, concerning space requirements of our algorithm, in expectation we sample $n \cdot p = \Order(\frac{n \log^2(n) g}{\gamma \delta^2 \log c} )$ nodes and
compute the degree for each node. Hence,
space $\Order(\frac{n \log^3(n) g}{\gamma \delta^2 \log c})$ bits are
sufficient.
Using a Chernoff bound, it can be seen that this also holds with high probability.
\end{proof}
We now use the previous lemma to establish our main theorem.
\begin{theorem} \label{thm:edge-arrival-gamma}
 Let $\gamma \le \beta(G)$. Then, there is a randomized one-pass approximation streaming algorithm in the edge arrival model with space
 $\Order(\frac{n \log^3(n) }{\gamma \epsilon^4})$ that approximates $\beta(G)$ within a
 factor of $1+\epsilon$, with high probability.
 If $\gamma > \beta(G)$, then the algorithm uses the same space and with high probability outputs a value $\beta'$ with $\beta' \le (1+\epsilon) \beta(G)$.
\end{theorem}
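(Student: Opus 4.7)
The plan is to invoke Lemma~\ref{lem:sampling-alg} with carefully chosen values of the internal parameters $\delta$, $c$, $g$ so that (i) the two-sided guarantee of the lemma collapses to a $(1+\epsilon)$-approximation of $\beta(G)$, and (ii) the stated space bound of $\Order(n \log^3 n / (\gamma \epsilon^4))$ drops out. The case $\gamma > \beta(G)$ will then be essentially free.

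First, I would chain the lemma with Inequality~\ref{eqn:091}. The lemma gives
$$\frac{1}{1+\delta}\sum_{i\in I_1}\beta'_i(G) \;\le\; \beta' \;\le\; (1+\delta)\,\beta(G),$$
and Inequality~\ref{eqn:091} says $\sum_{i\in I_1}\beta'_i(G) \ge \frac{g-1}{gc}\beta(G)$. Combining these yields
$$\frac{g-1}{g\,c\,(1+\delta)}\,\beta(G) \;\le\; \beta' \;\le\; (1+\delta)\,\beta(G),$$
so I need to choose parameters making both $1+\delta$ and $\frac{gc(1+\delta)}{g-1}$ at most $1+\epsilon$.

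A concrete choice is $\delta = \epsilon/4$, $c = 1+\epsilon/4$, and $g = 4/\epsilon$. For small $\epsilon$, each of the factors $1+\delta$, $c$, and $g/(g-1)$ equals $1+\Theta(\epsilon)$, and their product remains within $1+\epsilon$ with room to spare (one may verify this directly, or absorb the constants into the $1/4$ slacks). With these values, the space bound from Lemma~\ref{lem:sampling-alg} becomes
$$\Order\!\left(\frac{n\log^3(n)\,g}{\gamma\,\delta^2\,\log c}\right) \;=\; \Order\!\left(\frac{n\log^3(n)\cdot(1/\epsilon)}{\gamma\cdot\epsilon^2\cdot\epsilon}\right) \;=\; \Order\!\left(\frac{n\log^3 n}{\gamma\,\epsilon^4}\right),$$
using $\log c = \log(1+\epsilon/4) = \Theta(\epsilon)$ for $\epsilon$ bounded (for $\epsilon \ge 1$ one simply runs the algorithm with $\epsilon$ clipped to $1$).

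For the second part, the upper bound half of Lemma~\ref{lem:sampling-alg} is stated to hold w.h.p.\ even when $\gamma > \beta(G)$, so under the same parameter settings we still get $\beta' \le (1+\delta)\beta(G) \le (1+\epsilon)\beta(G)$, with the same space footprint. The main obstacle here is really just the parameter bookkeeping: the three parameters $\delta$, $c$, $g$ interact multiplicatively in the lower-bound direction but each inflates the space budget differently ($\delta$ via $1/\delta^2$, $g$ linearly, and $c$ through $1/\log c$), so I would carefully verify that the balanced choice above indeed attributes a $\Theta(\epsilon)$-slack to each while driving the overall space dependence on $\epsilon$ down to $\epsilon^{-4}$.
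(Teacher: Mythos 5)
Your proposal is correct and follows essentially the same route as the paper: invoke Lemma~\ref{lem:sampling-alg}, chain it with Inequality~\ref{eqn:091}, and tune $\delta$, $c$, $g$ so that the distortion $\frac{gc(1+\delta)}{g-1}$ is at most $1+\epsilon$ while keeping $g/(\delta^2 \log c) = \Theta(\epsilon^{-4})$. Two small notes: your constants ($g=4/\epsilon$, $c=1+\epsilon/4$, $\delta=\epsilon/4$) are tighter than the paper's ($10/\epsilon$, $1+\epsilon/10$, $\epsilon/10$) and the bound $\frac{gc(1+\delta)}{g-1}\le 1+\epsilon$ in fact fails for $\epsilon$ close to $1$ (e.g.\ at $\epsilon=0.9$ the left side is about $1.94 > 1.9$), so you do need to enlarge them as you anticipated; and your treatment of the $\log c$ factor is actually cleaner than the paper's own write-up, which places $\log c$ in the wrong position in the intermediate step and appeals to $\log(1+\epsilon)<\epsilon$ in the wrong direction, whereas $\log c = \Theta(\epsilon)$ in the denominator is exactly what is needed.
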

\begin{proof}
Suppose first that $\gamma \le \beta(G)$.
We run Algorithm~\ref{alg:3} using values for $\delta, c$ and $g$, which we determine later. By Lemma~\ref{lem:sampling-alg} the algorithm returns a
value $\beta'$ such that $\frac{1}{1+\delta} \sum_{i \in I_1} \beta'_i(G) \le \beta' \le (1+\delta) \beta(G)$. Using Inequality~\ref{eqn:091}, this gives
$$
 \frac{\beta'}{1+\delta} \le \beta(G) \le \frac{g c}{g - 1} (1+\delta) \beta' .
$$
Thus, we obtain a $(1+\epsilon)$-approximation, if $\frac{g c}{g - 1} (1+\delta) \le 1+\epsilon$. It can be verified that this is fulfilled
if we set $g = \frac{10}{\epsilon}$, $c = 1+\frac{\epsilon}{10}$ and $\delta = \frac{\epsilon}{10}$.
The space requirements thus are $\Order(\frac{n \log^3(n) g}{\gamma \delta^2} \log c) = \Order(\frac{n \log^3(n) }{\gamma \epsilon^3} \log (1+\epsilon)) = \Order(\frac{n \log^3(n) }{\gamma \epsilon^4} )$,
using the fact that $\log (1 + \epsilon) < \epsilon$, for any
$\epsilon<1$.

Last, if $\gamma > \beta(G)$, then $\beta'_i(G) \le \beta' \le (1+\delta) \beta(G)$ equally applies, by Lemma~\ref{lem:sampling-alg},
and the upper bound equally holds.
\end{proof}

Last, we turn the algorithm of the previous theorem into an algorithm with approximation factor $\phi$.

\begin{theorem}\label{thm:edge-arrival-phi}
 Let $\phi > 2$ and suppose that $\gamma' \le \beta(G)$ is a given lower bound on $\beta(G)$.
 There is a randomized one-pass approximation streaming algorithm in the edge arrival model with space
 $\Order(\frac{n \log^3(n) }{\gamma' \phi^2})$ that approximates $\beta(G)$ within a
 factor of $\phi$, with high probability.
\end{theorem}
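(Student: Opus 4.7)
The plan is to reuse Theorem~\ref{thm:edge-arrival-gamma} with a \emph{constant} accuracy parameter and an \emph{inflated} lower bound. Since the hypothesis $\phi > 2$ gives generous slack, a $(1+\epsilon)$-approximation with any fixed constant $\epsilon \in (0,1)$ is automatically a $\phi$-approximation; concretely I would take $\epsilon = 1/2$. To recover the $\phi^2$ (rather than $\phi$) factor in the denominator of the space bound, I would feed Algorithm~\ref{alg:3} through Theorem~\ref{thm:edge-arrival-gamma} not with $\gamma'$ but with the fictitious lower bound $\gamma := \phi^2 \gamma' / k$ for a small constant $k$ (for instance $k=9$), and finally output
\[
  \hat\beta \;:=\; \max\!\bigl\{\,\beta',\ \tfrac{\phi \gamma'}{3}\,\bigr\},
\]
where $\beta'$ is the value returned by the invocation of Theorem~\ref{thm:edge-arrival-gamma}.

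Correctness splits into two cases according to whether the inflated bound is actually valid. If $\gamma \le \beta(G)$, then Theorem~\ref{thm:edge-arrival-gamma} guarantees $\beta' \in [\tfrac{2}{3}\beta(G),\, \tfrac{3}{2}\beta(G)]$ w.h.p., and both endpoints lie in the permitted window $[\beta(G)/\phi,\, \phi\beta(G)]$ because $\phi > 2$. If instead $\gamma > \beta(G)$, the second part of Theorem~\ref{thm:edge-arrival-gamma} still yields $\beta' \le \tfrac{3}{2}\beta(G) < \phi\beta(G)$, while the case hypothesis rewrites as $\gamma' \le \beta(G) < \phi^2 \gamma'/9$, from which a one-line calculation gives $\tfrac{\phi\gamma'}{3} \in [\beta(G)/\phi,\, \phi\beta(G)]$. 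Taking the maximum preserves the lower bound in both cases and does not violate the upper bound, since neither candidate ever exceeds $\phi\beta(G)$.

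The space bound is immediate from Theorem~\ref{thm:edge-arrival-gamma}: substituting $\gamma = \Theta(\phi^2 \gamma')$ and the constant $\epsilon = 1/2$ into $O(n\log^3 n/(\gamma \epsilon^4))$ yields $O(n\log^3 n/(\phi^2 \gamma'))$, as required, and storing the scalar $\phi\gamma'/3$ for the fallback costs only $O(\log n)$ extra bits. I do not foresee any serious obstacle: the entire argument is essentially a reduction to the previous theorem, and the only real work is checking that the chosen constants $\epsilon=1/2$ and $k=9$ align so that the ``good case'' ratio $3/2$ is strictly below $\phi$ and so that the ``bad case'' fallback $\phi\gamma'/3$ lands in the $\phi$-approximation window whenever $\beta(G) < \phi^2\gamma'/9$; both follow directly from the hypothesis $\phi > 2$.
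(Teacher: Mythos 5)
Your proposal is correct and follows essentially the same reduction as the paper: inflate the given lower bound to $\gamma = \Theta(\phi^2\gamma')$, run the $(1+\epsilon)$-approximation of Theorem~\ref{thm:edge-arrival-gamma} with a fixed constant $\epsilon$ (the paper uses $\epsilon=1/4$ and $\gamma=\phi^2\gamma'$; you use $\epsilon=1/2$ and $\gamma=\phi^2\gamma'/9$), and output a fallback value of order $\phi\gamma'$ when the estimate may be unreliable. The only structural difference is cosmetic but pleasant: the paper chooses between $\beta'$ and the fallback via a threshold test ($\beta' \ge \gamma/(1+\epsilon)$), which forces a three-case analysis covering the intermediate regime $\gamma/2 \le \beta(G) \le \gamma$, whereas your $\max\{\beta',\phi\gamma'/3\}$ collapses this to two cases by observing that neither candidate can exceed $\phi\beta(G)$ and at least one always clears $\beta(G)/\phi$.
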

\begin{proof}
We run the algorithm as stated in Theorem~\ref{thm:edge-arrival-gamma} with values
$\gamma = \gamma' \cdot \phi^2$ and $\epsilon = 1/4$. Let $\beta'$ be the output of the algorithm of Theorem~\ref{thm:edge-arrival-gamma}.
Then our algorithm returns the value $\beta'$ if $\beta' \ge \gamma / (1+\epsilon)$, and $\gamma' \phi$ otherwise.

First, suppose that $\beta(G) \ge \gamma$. By Theorem~\ref{thm:edge-arrival-gamma}, with high probability,
it holds $\beta(G) / (1+\epsilon) \le \beta' \le \beta(G) (1+\epsilon)$, and thus the output of our algorithm is
$\beta'$, which constitutes a $(1+\epsilon)$-approximation.

Next, suppose that $\beta(G) \le \gamma / 2$. By Theorem~\ref{thm:edge-arrival-gamma}, with high probability,
it holds $\beta' \le \beta(G) (1+\epsilon)$ and thus the output of our algorithm is $\gamma' \phi$. Since
$\beta(G) \le \gamma / 2$ (and larger than $\gamma'$), this constitutes a $\phi$-approximation.

Last, if $\gamma / 2 \le \beta(G) \le \gamma$, then both outputs $\beta(G)$ and $\gamma' \phi$
give $\phi$-approximations.
\end{proof}

\section{Algorithm in the Vertex-arrival Model}
\label{sec:algs-vertex}

Let $v_1, \dots, v_n$ be the order in which the vertices appear in the stream. Let $G_i = G[\{v_1, \dots, v_i\}]$ be
the subgraph induced by the first $i$ vertices.

Let $n_{d,i} := | \{ v \in V(G_i) \, : \, \deg_{G_i}(v) \le d \} |$ be the number of vertices of degree at most $d$ in $G_i$,
and let $n_d = \max_{i} n_{d,i}$. We first give an algorithm, \textsc{DegTest}$(d, \epsilon)$, which with high probability
returns a $(1+\epsilon)$-approximation of $n_d$ using $\Order(\frac{1}{\epsilon^2} \log^2 n)$ bits of space.

In the description of the algorithm, we suppose that we have a random function
\textsc{coin}: $\left[0, 1\right] \rightarrow \{\texttt{false}, \texttt{true}\}$ such that
\textsc{coin}($p$) = \texttt{true} with probability $p$ and
\textsc{coin}($p$) = \texttt{false} with probability $1-p$.
Furthermore,  the outputs of repeated invocations of \textsc{coin} are independent.

\begin{algorithm}
\begin{algorithmic}[1]
 \REQUIRE Degree bound $d$, $\epsilon$ for a $1+\epsilon$ approximation
  \STATE $p \gets 1$, $S \gets \varnothing$, $m \gets 0$, $\epsilon' \gets \epsilon / 2$, $c \gets \frac{28}{\epsilon'^2}$
  \WHILE[The current subgraph is $G_i$]{stream not empty}
   \STATE $v \gets $ next vertex in stream
   \STATE \textbf{if} \textsc{coin}($p$) \textbf{then} $S \gets S \cup \{ v\}$ \textbf{end if} \COMMENT{Sample vertex with probability $p$}
   \STATE Update degrees of vertices in $S$, i.e., ensure that for every $u \in S$ $\deg_{G_i}(u)$ is known
   \STATE Remove every vertex $u \in S$ from $S$ if $\deg_{G_i}(u) > d$
   \STATE \textbf{if} $p = 1$ \textbf{then} $m \gets \max \{ m, |S| \}$ \textbf{end if}
   \IF{$|S| = c \log(n)$} \label{line:c}
    \STATE $m \gets c \log(n) / p$
    \STATE Remove each element from $S$ with probability $\frac{1}{1+\epsilon'}$
    \STATE $p \gets p / (1 + \epsilon')$
   \ENDIF
  \ENDWHILE
  \RETURN $m$
 \end{algorithmic}
 \caption{Algorithm $\textsc{DegTest}(d, \epsilon)$  \label{alg:1}}
\end{algorithm}
Algorithm $\textsc{DegTest}(d, \epsilon)$ maintains a sample $S$ of at most $c \log n$ vertices. It ensures that all
vertices $v \in S$ have degree at most $d$ in the current graph $G_i$ (notice that $\deg_{G_i}(v) \le \deg_{G_j}(v)$, for every
$j \ge i$). Initially, $p=1$, and all vertices of degree at most $d$ are stored in $S$. Whenever $S$ reaches the limiting
size of $c \log n$, we downsample $S$ by removing every element of $S$ with probability $\frac{1}{1+\epsilon'}$ and update
$p \gets p / (1+\epsilon')$. This guarantees that throughout the algorithm $S$ constitutes a uniform random sample
(with sampling probability $p$) of all vertices of degree at most $d$ in $G_i$.

The algorithm outputs $m \gets c \log(n) / p$ as the estimate for $n_d$, where $p$ is the largest value of $p$ that occurs
during the course of the algorithm. It is updated whenever $S$ reaches the size $c \log n$, since
$S$ is large enough at this moment to be use as an accurate predictor for $n_{d,i}$, and hence also for $n_d$.
\begin{lemma}\label{lem:degtest}
 Let $0 < \epsilon \le 1$. $\textsc{DegTest}(d, \epsilon)$ (Algorithm~\ref{alg:1}) approximates $n_d$ within a factor $1+\epsilon$ with high probability, i.e.,
 $$\frac{n_d}{1+\epsilon} \le \textsc{DegTest}(d, \epsilon) \le (1+\epsilon) n_d \, ,$$
 and uses $\Order(\frac{1}{\epsilon^2} \log^2 n)$ bits of space.
\end{lemma}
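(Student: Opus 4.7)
The analysis proceeds in three steps: a distributional invariant for $S$, a Chernoff-plus-union-bound argument for concentration, and a deterministic case analysis of the final sampling rate.

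First, I would show by induction on the stream index $i$ that, conditional on the current rate $p$, the set $S$ is a uniform Bernoulli sample of $A_i := \{v \in V(G_i) : \deg_{G_i}(v) \le d\}$ with independent inclusion probability $p$. A vertex $v$ is admitted at its arrival with probability $p_{i_v}$, independently survives each subsequent down-sampling with probability $1/(1+\epsilon')$, and is deterministically purged if and when its degree first exceeds $d$; the telescoped survival probability is exactly the current $p$, and the purge matches departures from $A_i$. Consequently $|S| \sim \mathrm{Bin}(n_{d,i}, p_j)$ whenever phase $j$ with rate $p_j = (1+\epsilon')^{-j}$ is active.

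Second, with $c = 28/\epsilon'^2$ and $\epsilon' = \epsilon/2$, a standard Chernoff bound gives that whenever $n_{d,i}\,p_j \ge c\log n$, the value $|S|$ at time $i$ in phase $j$ lies in $(1\pm\epsilon')\,n_{d,i}\,p_j$ with probability $1 - n^{-\Theta(1)}$. Union-bounding over the $\Order(n\log n/\epsilon')$ pairs (time, phase) yields a good event $\mathcal{E}$ on which all such concentration statements hold simultaneously w.h.p.; under $\mathcal{E}$ the remaining argument is deterministic. On $\mathcal{E}$ I would split into two cases. If $n_d < c\log n$, no trigger ever fires: phase $0$ has $p=1$ and $S = A_i$ exactly, so the phase-$0$ maximum update yields $m = \max_i n_{d,i} = n_d$ exactly. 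Otherwise, let $p^{\mathrm{end}}$ denote the final rate and $p^{\mathrm{prev}} = (1+\epsilon')p^{\mathrm{end}}$ the rate used at the last trigger, so that $m = c\log n / p^{\mathrm{prev}}$.

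For the upper bound on $m$, concentration at the last trigger time $i$ gives $c\log n = |S| \ge (1-\epsilon')\,n_{d,i}\,p^{\mathrm{prev}}$, hence $m \le n_{d,i}/(1-\epsilon') \le (1+\epsilon)n_d$ after absorbing constants. For the matching lower bound, consider the time $i^\ast$ attaining $n_d$: if we had $p^{\mathrm{end}} \ge (1+\epsilon')\,c\log n /n_d$, concentration in the last phase would force $|S_{i^\ast}| \ge c\log n$, contradicting the fact that no further trigger fires. Hence $p^{\mathrm{end}} < (1+\epsilon')\,c\log n / n_d$, so $m = c\log n / p^{\mathrm{prev}} > n_d/(1+\epsilon')^2 \ge n_d/(1+\epsilon)$ after readjusting $\epsilon'$ by a constant factor.

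The space bound is immediate: $|S| \le c\log n = \Order(\log n/\epsilon^2)$ at all times, each element carrying an identifier and degree counter in $\Order(\log n)$ bits, for a total of $\Order(\log^2 n/\epsilon^2)$. The main obstacle is that the down-sampling triggers are adaptive stopping events, so one cannot apply a Chernoff bound cleanly ``at the trigger''. The union bound over all (time, phase) pairs sidesteps this difficulty by guaranteeing $|S_t| \approx n_{d,t}\,p_j$ simultaneously everywhere, which lets the trigger analysis collapse to the algebraic manipulations above.
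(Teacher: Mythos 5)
Your proof is correct in its essential structure but takes a genuinely different route from the paper. The paper proves concentration by a careful induction on phase transitions: it defines milestone times $j_i$ (the first time $n_{d,\cdot}$ crosses $c\log n(1+\epsilon')^i(1+\epsilon'/2)$), shows via event $E$ that the phase never gets ahead of the milestones, and shows via events $F_i$ inductively that the phase also never lags behind; the output is then pinned to one of two consecutive values. You instead couple the algorithm to a family of ``virtual'' samples indexed by \emph{all} (time, phase) pairs, union-bound a single good event $\mathcal{E}$ over that grid, and then argue deterministically on $\mathcal{E}$ directly from the trigger condition. This sidesteps the induction entirely and makes the treatment of the adaptive stopping time explicit, which is a real expository gain; the paper handles the same adaptivity issue implicitly via the phase-indexed events.

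Two small technical points to tighten. First, your stated concentration guarantee on $\mathcal{E}$ carries the precondition $n_{d,i}p_j \ge c\log n$, but the step ``concentration at the last trigger time $i$ gives $c\log n = |S| \ge (1-\epsilon')n_{d,i}p^{\mathrm{prev}}$'' applies it at a point where this precondition is exactly what you are trying to establish; as written this is circular. The fix is to phrase $\mathcal{E}$ with two one-sided guarantees: for every $(i,j)$ with $n_{d,i}p_j \le c\log n/(1+\epsilon')$, the trigger does \emph{not} fire (upper-tail bound, valid with no large-mean precondition), and for every $(i,j)$ with $n_{d,i}p_j \ge c\log n/(1-\epsilon')$, the trigger \emph{does} fire (lower-tail bound, whose mean requirement is now satisfied by hypothesis). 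Taking contrapositives at the last trigger and at the time attaining $n_d$ then yields $n_d/(1+\epsilon')^2 \le m \le (1+\epsilon')n_d$ cleanly. Second, with $\epsilon' = \epsilon/2$ the factor $(1+\epsilon')^2$ slightly exceeds $1+\epsilon$, so you do need the constant-factor readjustment of $\epsilon'$ you flagged (e.g.\ $\epsilon' = \epsilon/3$); the paper's slightly asymmetric milestone definition with the extra factor $(1+\epsilon'/2)$ is precisely what lets it keep $\epsilon' = \epsilon/2$.
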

\begin{proof}
First, suppose that $n_d < c \log n$. Then the algorithm never downsamples the set $S$ and computes $n_d$ exactly
(and makes no error). 

Assume now that $n_d \ge c \log n$. For $i \ge 0$, let $j_i$ be the smallest index $j$ such that
$n_{d,j} \ge c \log n (1+\epsilon')^i (1+\epsilon'/2)$. We say that the algorithm is in phase $i$, if $p = 1 / (1+ \epsilon')^i$.

First, for any $i$, we argue that in iteration $k \le j_i$, the algorithm is in a phase at most $i+1$ w.h.p. Let $E_{k,i}$ be the event
that the transition from phase $i+1$ to $i+2$ occurs in iteration $k \le j_i$, and let $E$ be the event
that at least one of the events $E_{k,i}$, for every $k$ and $i$, occurs.
For $E_{k,i}$ to happen, it is necessary that the algorithm is in phase $i+1$ in iteration $k$. Assume that this is the case.
Then, since $n_{d, k} \le n_{d, j_i}$, the expected size of $S$ in iteration $k$
is
$$\Exp{S} = \frac{n_{d,k}}{p} \le \frac{c \log (n) (1+\epsilon')^i (1+\epsilon'/2)}{(1+ \epsilon')^{i+1}} =  \frac{c \log (n) (1+\epsilon'/2)}{1+\epsilon'},$$
and thus, by a Chernoff bound,
\begin{eqnarray*}
\Pr \left[ |S| \ge c \log n \right] & \le & \exp \left(- \frac{ (\frac{1+\epsilon'}{1+\epsilon'/2})^2 }{2 + \frac{1+\epsilon'}{1+\epsilon'/2} } \cdot
\frac{c \log (n) (1+\epsilon'/2)}{1+\epsilon'} \right) =  \exp \left(- \frac{ \frac{1+\epsilon'}{1+\epsilon'/2}  c \log (n)}{2 + \frac{1+\epsilon'}{1+\epsilon'/2} } \right) \\
& = & \exp \left(- \frac{ (1+\epsilon')  c \log (n)}{3 + 2\epsilon' } \right) \le \exp \left(- \frac{ c \log (n)}{3} \right) \le n^{-3},
\end{eqnarray*}
for $c \ge 21$.
Thus, by the union bound, the probability that $E$ occurs is at most $n^{-2}$.

We assume from now on that $E$ does not occur.
Let $F_i$ be the event that at the end of iteration $j_i$, the algorithm is in phase $i+1$. We prove now by induction
that all $F_i$ occur with high probability. Consider first
$F_0$. Conditioned on $\neg E$, the algorithm is in phase $0$ or $1$ after
iteration $j_0$. We argue that with high probability, the algorithm is in phase $1$ after iteration $j_0$. Suppose
that the algorithm is in phase $0$ in the beginning of iteration $j_0$. Then,
$\Exp{S} = \frac{n_{d, j_0}}{p} = n_{d, j_0} = c \log n (1+\epsilon'/2)$.
Thus, by a Chernoff bound,
\begin{eqnarray*}
 \Pr \left[ |S| \le c \log n \right] \le \exp \left( - c \log n (1+\epsilon'/2) (\frac{\epsilon'}{2+\epsilon'})^2  \right) = \exp \left( - c \log n \frac{\epsilon'^2}{4+ 2\epsilon'} \right) \le n^{-2},
\end{eqnarray*}
for $c \ge \frac{28}{\epsilon'^2}$, and hence, if the algorithm was in phase $0$ at the beginning of iteration $j_0$, then, with high probability,
the transition to phase $1$ would occur.

Assume now that both $\neg E$ and $F_i$ hold. Then, the algorithm is in phase
$i+1$ or $i+2$ at the end of iteration $j_{i+1}$. Suppose we are in phase $i+1$ at the beginning of iteration $j_{i+1}$.
Then, $\Exp{S} = \frac{n_{j_0, d}}{p} = n_{j_0, d} = c \log n (1+\epsilon'/2)$, and by the same Chernoff bound as above,
the transition to phase $i+2$ would take place with high probability, which implies that $F_{i+1}$ holds.

Let $j_{\text{max}}$ be the largest $j$ such that $c \log n (1+\epsilon'/2) (1+\epsilon')^j \le n_d$. As proved above, when the algorithm
terminates, then the output $m$ is either $c \log n (1+\epsilon')^{j_{\text{max}}}$ or $c \log n (1+\epsilon')^{j_{\text{max}}+1}$ with high probability.
Suppose first that the output is $m = c \log n (1+\epsilon')^{j_{\text{max}}}$. Since $m (1+\epsilon'/2) (1+\epsilon') \ge n_d$,
the algorithm computes a $(1+\epsilon'/2) (1+\epsilon') \le (1+2\epsilon')$-approximation. Suppose now that the output is
$m = c \log n (1+\epsilon')^{j_{\text{max}}+1}$. Since $m (1+\epsilon'/2) / (1+\epsilon') \le n_d$, we equally obtain a $(1+2\epsilon')$-approximation.
Since $\epsilon = 2 \epsilon'$, the algorithm returns a $(1+\epsilon)$-approximation.

Concerning the space requirements of the algorithm, at most $c \log n$ vertex degrees are stored, which requires $\Order(\frac{1}{\epsilon^2} \log^2 n)$ bits of space.
\end{proof}

Next, we run multiple copies of \textsc{DegTest} in order to obtain our main algorithm, Algorithm~\ref{alg:2}.

\begin{algorithm}
\begin{algorithmic}
  \STATE \textbf{for} every $i \in \{0, 1, \dots, \lceil \log n \rceil \}$, run in parallel:
  \STATE \quad $\tilde{n}_{2^i} = \textsc{DegTest}(2^i, 1/2)$
  \STATE \textbf{end for}
  \RETURN $\displaystyle \max \left \{ \frac{\tilde{n}_{2^i}}{2(2^i + 1)} \, : \, i \in \{0, 1, \dots, \lceil \log n \rceil \} \right \}$
 \end{algorithmic}
 \caption{Algorithm in the Vertex-arrival Order \label{alg:2}}
\end{algorithm}

\begin{theorem}
 Let $\gamma$ be the output of Algorithm~\ref{alg:2}. Then, the following holds with high probability:
 \begin{enumerate}
  \item $\gamma = \Omega( \frac{\beta(G)}{\log n})$, and
  \item $\gamma \le \alpha(G)$.
 \end{enumerate}
 Furthermore, the algorithm uses space $\Order(\log^3 n)$ bits.
\end{theorem}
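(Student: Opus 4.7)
The plan is to handle the three claims separately, relying throughout on a single event: by Lemma~\ref{lem:degtest} applied with $\epsilon = 1/2$ and a union bound over the $\lceil\log n\rceil + 1$ invocations, with high probability every $\tilde{n}_{2^i}$ satisfies $(2/3)\,n_{2^i} \le \tilde{n}_{2^i} \le (3/2)\,n_{2^i}$ simultaneously. The space bound is then immediate: each call to \textsc{DegTest} costs $O(\log^2 n)$ bits deterministically by Lemma~\ref{lem:degtest}, and there are $O(\log n)$ of them, giving $O(\log^3 n)$ total.

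For the upper bound $\gamma \le \alpha(G)$, I would first establish the deterministic inequality $\alpha(G) \ge n_d/(d+1)$ for every $d\ge 0$. The argument is to take an index $k$ achieving $n_d = n_{d,k}$, restrict to the set $H$ of vertices of $G_k$ of degree at most $d$ in $G_k$, and apply the Caro--Wei bound to the induced subgraph $G_k[H]$: every vertex of $H$ has degree at most $d$ in $G_k[H]$, yielding an independent set of size $\ge |H|/(d+1) = n_d/(d+1)$, which is also independent in $G$ since $G_k[H]$ is an induced subgraph of $G$. Plugging in the upper bound $\tilde{n}_{2^i} \le (3/2)\,n_{2^i}$ gives $\tilde{n}_{2^i}/(2(2^i+1)) \le (3/4)\,\alpha(G)$ for every $i$, so the maximum over $i$ is at most $\alpha(G)$. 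The factor $2$ in the denominator of the algorithm's output is precisely tuned to absorb the $3/2$ over-estimation.

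For the lower bound $\gamma = \Omega(\beta(G)/\log n)$, I would partition the vertices into degree classes $V_j = \{v : 2^j \le \deg_G(v) < 2^{j+1}\}$ for $j = 0,\dots,\lceil\log n\rceil-1$. Writing $\beta(G) = \sum_j \beta_j$ with $\beta_j \le |V_j|/(2^j+1)$, an averaging argument yields some $j^*$ with $|V_{j^*}| \ge (2^{j^*}+1)\,\beta(G)/(\lceil\log n\rceil+1)$. Every vertex of $V_{j^*}$ has degree strictly less than $2^{j^*+1}$ in the final graph (and hence in every $G_i$), so $n_{2^{j^*+1}} \ge |V_{j^*}|$. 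Combining with $\tilde{n}_{2^{j^*+1}} \ge (2/3)\,n_{2^{j^*+1}}$ and the elementary bound $(2^{j^*}+1)/(2^{j^*+1}+1) \ge 1/2$ shows that the $j^*$-th term in the $\max$ is already $\Omega(\beta(G)/\log n)$, and therefore so is $\gamma$.

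The main obstacle is not conceptual but rather arithmetic bookkeeping: one has to check that the constants baked into Algorithm~\ref{alg:2} (the $\epsilon = 1/2$ passed to \textsc{DegTest}, the dyadic sequence of degree bounds $2^i$, and the factor $2$ in the denominator of the returned maximum) simultaneously keep every candidate strictly below $\alpha(G)$ after accounting for the $3/2$ over-estimate, and lose only constant factors in the lower bound after accounting for the $2/3$ under-estimate and the fact that the quantity directly approximated is $n_{2^{j^*+1}}$ rather than $|V_{j^*}|$ itself.
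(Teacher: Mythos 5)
Your proposal is correct and follows essentially the same route as the paper: partition vertices into dyadic degree classes, extract a heavy class $j^*$ by averaging, and match it against the relevant \textsc{DegTest} estimate; for the upper bound, apply Caro--Wei at the time step maximizing $n_{d,i}$ and note that an independent set in $G_k$ remains independent in $G$. One small point where you are actually more careful than the paper's write-up: you compare $|V_{j^*}|$ against $n_{2^{j^*+1}}$ (and absorb the resulting factor $\frac{2^{j^*}+1}{2^{j^*+1}+1} \ge 1/2$), which is the right index since vertices in $V_{j^*}$ have final degree up to $2^{j^*+1}-1$; the paper instead sets $d_{\max} = 2^{i_{\max}}$ and writes $|V_{i_{\max}}| \le n_{d_{\max}}$, which requires an off-by-one correction to be literally true. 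Your Caro--Wei step also applies the bound directly to $G_k[H]$ rather than truncating the sum over $G_k$; both are fine and equivalent up to presentation.
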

\begin{proof}
 For $0 \le i < \lceil \log(n) \rceil$, let $V_i \subseteq V$ be the subset of vertices with $\deg_G(v) \in \{2^i, 2^{i+1}-1\}$.
 Then,
 \begin{eqnarray}
 \nonumber \beta(G) & = & \sum_{v \in V} \frac{1}{\deg_G(v) + 1} = \sum_{i} \sum_{v \in V_i} \frac{1}{\deg_G(v) + 1} \le
 \sum_{i} \frac{|V_i|}{2^{i}+1}.
 \end{eqnarray} \\
Let $i_{\text{max}} := \argmax_i \frac{|V_i|}{2^{i}+1}$. Then, we further simplify the previous inequality as follows:
\begin{eqnarray}
 \beta(G) \le \dots \le \sum_{i} \frac{|V_i|}{2^{i}+1} & \le & \lceil \log(n) \rceil \cdot \frac{|V_{i_{\text{max}}}|}{2^{i_{\text{max}}}+1} \le
 \lceil \log(n) \rceil \cdot \frac{|V_{\le i_{\text{max}}}|}{2^{i_{\text{max}}}+1}. \label{eqn:493}
 \end{eqnarray}
where $V_{\le i} = \cup_{j \le i} V_j$.
Let $d_{\text{max}} = 2^{i_{\text{max}}}$.
Since $|V_{i_{\text{max}}}| \le n_{d_{\text{max}}}$
and $\tilde{n}_{d_{\text{max}}} = \textsc{DegTest}(d_{\text{max}}, 1/2)$ is a $1.5$-approximation to $n_{d_{\text{max}}}$,
we obtain $\gamma = \Omega(\frac{\beta(G)}{\log n})$, which proves Item 1.

Concerning Item 2, notice that for every $i$ and $d$, it holds
\begin{eqnarray*}
  \alpha(G) \ge \alpha(G_{i}) \ge \beta(G_i) = \sum_{v \in V(G_i)} \frac{1}{\deg_{G_i}(v) + 1} \ge \sum_{v \in V(G_i): \deg_{G_i}(v) \le d} \frac{1}{\deg_{G_i}(v) + 1} \ge \frac{n_{i,d}}{d + 1},
 \end{eqnarray*}
 and, in particular, the inequality holds for $n_{d_{\text{max}}} = n_{i_{\text{max}}, d_{\text{max}}}$. Since the algorithm returns a value bounded by
 $\frac{\tilde{n}_{d_{\text{max}}}}{2 \cdot (d_{\text{max}} + 1)}$, and $\tilde{n}_{d_{\text{max}}}$ constitutes a $1.5$-approximation of $n_{d_{\text{max}}}$, Item~2 follows.

Concerning the space requirements, the algorithm runs $\Order(\log n)$ copies of Algorithm~\ref{alg:1} which itself
requires $\Order(\log^2 n)$ bits of space.
\end{proof}

\section{Space Lower Bound}
\label{sec:lb}
Our lower bound follows from a reduction using a well-known hard problem
from communication complexity.
Let $\textsc{DISJ}_n$ refer to the \emph{two-party set disjointness problem} for inputs of size $n$.
In this problem we have two parties, Alice and Bob. Alice knows $X \subset [n]$, while Bob knows $Y \subset [n]$.
Alice and Bob must exchange messages until they both know whether $X
\cap Y = \emptyset$ or $X \cap Y \neq \emptyset$.

Using $R(\textsc{DISJ}_n)$ to refer to the randomised (bounded error probability) communication complexity of $\textsc{DISJ}_n$, the following theorem is known.

\begin{theorem}[Kalyanasundaram and Schintger \cite{ks92}]\label{thm:comm}
\[R(\textsc{DISJ}_n) \in \Omega(n)\]
\end{theorem}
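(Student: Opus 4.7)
The plan is to establish the $\Omega(n)$ randomized communication lower bound via the information-complexity framework, following the direct-sum approach of Bar-Yossef, Jayram, Kumar and Sivakumar, which gives a modern alternative proof of the Kalyanasundaram--Schnitger result. The high-level strategy is to identify a hard distribution on inputs, reduce the $n$-bit problem to $n$ independent copies of a single-bit sub-problem, and lower-bound the information cost of that single-bit sub-problem by a constant.

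First I would fix a ``collapsing'' distribution $\mu$ on inputs $(X,Y)\in\{0,1\}^n\times\{0,1\}^n$: independently for each coordinate $i\in[n]$, draw $(X_i,Y_i)$ uniformly from $\{(0,0),(0,1),(1,0)\}$. Under $\mu$ the answer to $\textsc{DISJ}_n$ is deterministically ``disjoint'', but correctness on the nearby inputs obtained by flipping one coordinate to $(1,1)$ forces the protocol to gain information about each coordinate individually. For a randomized protocol $\Pi$ with public randomness $R$ I would define the internal information cost $\mathrm{IC}_\mu(\Pi) = I(X,Y;\Pi(X,Y)\mid R)$ and recall that it is bounded above by the worst-case communication $|\Pi|$ (encoding the transcript suffices).

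Next I would apply the direct-sum decomposition: since the coordinates $(X_i,Y_i)$ are mutually independent under $\mu$, standard chain-rule manipulations give $I(X,Y;\Pi\mid R) \ge \sum_{i=1}^n I(X_i,Y_i;\Pi\mid R, X_{-i}, Y_{-i})$, and a routine simulation argument (Alice and Bob fill in the other $n-1$ coordinates using shared randomness) shows each summand is at least the internal information cost of a single-coordinate sub-protocol solving the two-bit promise-$\mathrm{AND}$ problem under the marginal of $\mu$. The main obstacle, and where the real work lies, is the single-coordinate information-cost lower bound: I would invoke the Hellinger-distance / cut-and-paste lemma for transcripts to argue that any protocol distinguishing inputs from $\{(0,0),(0,1),(1,0)\}$ from $(1,1)$ with constant error must, on average over $\mu$, leak $\Omega(1)$ bits of information about the hidden coordinate.

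Chaining the single-coordinate constant with the direct-sum decomposition yields $\mathrm{IC}_\mu(\Pi) = \Omega(n)$, and since communication dominates information cost this gives $R(\textsc{DISJ}_n)\in\Omega(n)$. An alternative route I would keep in reserve is Razborov's original rectangle/corruption argument, which directly lower-bounds the number of monochromatic rectangles needed under a suitable near-disjoint product distribution; it avoids information theory but requires a more delicate combinatorial counting that I expect to be harder to present cleanly than the direct-sum proof above.
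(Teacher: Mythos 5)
The paper does not prove this theorem; it cites it as a known result of Kalyanasundaram and Schnitger \cite{ks92} and uses it purely as a black box in the reduction that establishes Theorem~\ref{thm:main}. So your sketch is not competing with any argument in the paper --- you are supplying a proof for something the authors take as given.

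Your chosen route, the information-complexity direct-sum approach of Bar-Yossef, Jayram, Kumar and Sivakumar, is a standard and correct modern derivation of the $\Omega(n)$ bound, and it differs from both the original Kalyanasundaram--Schnitger argument and from Razborov's corruption bound (which you correctly keep in reserve as an alternative). One technical point you gloss over deserves attention, however. Under your collapsing distribution $\mu$, the per-coordinate pair $(X_i,Y_i)$ is drawn uniformly from $\{(0,0),(0,1),(1,0)\}$, so $X_i$ and $Y_i$ are \emph{correlated}: they cannot both equal $1$. Consequently the embedding step you describe as ``routine'' --- Alice and Bob filling in the other $n-1$ coordinates using shared randomness --- does not work as stated, because neither party can privately sample a correlated pair $(X_j,Y_j)$ without communication. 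The standard fix, and the reason the BJKS framework introduces an auxiliary variable $D_j$ indicating which party's bit is forced to $0$, is to condition on a variable under which $X_j$ and $Y_j$ become independent; shared randomness then samples $D_j$, and each party fills in its own coordinate privately. Your information-cost quantity and direct-sum decomposition should therefore be conditional on this $D$ (i.e.\ use $I(X,Y;\Pi \mid R,D)$ and its per-coordinate decomposition). With that adjustment the chain --- superadditivity of conditional mutual information over independent coordinates, reduction to the single-coordinate promise-$\mathrm{AND}$, and the Hellinger / cut-and-paste constant lower bound --- is sound and yields $R(\textsc{DISJ}_n)=\Omega(n)$.
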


To get our lower bound, we will show a reduction from randomised set disjointness to randomised $c$-approximation of $\beta(G)$.



\begin{theorem}\label{thm:main}
Every randomized constant error one-pass streaming algorithm that approximates $\beta(G)$ within a factor of $c$ uses space $\Omega(\frac{n}{\beta(G)c^2})$, even if
the input stream is in vertex arrival order.
\end{theorem}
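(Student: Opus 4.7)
The plan is to prove Theorem~\ref{thm:main} by reducing the two-party set disjointness problem $\textsc{DISJ}_k$, for $k = \Theta(n/(\beta(G)c^2))$, to the streaming problem of $c$-approximating $\beta(G)$ in the vertex arrival model. Given a streaming algorithm $\mathcal{A}$ with space $s$ and constant success probability, Alice will simulate $\mathcal{A}$ on the prefix of the stream determined by her set $X$, send the $s$-bit memory state of $\mathcal{A}$ to Bob, and let Bob finish the simulation on the suffix determined by his set $Y$, obtaining an estimate $\hat\beta$. If the encoding of $(X,Y)$ as a graph makes $\beta$ differ by a factor strictly more than $c^2$ between the disjoint and intersecting cases, a single threshold test on $\hat\beta$ then decides $\textsc{DISJ}_k$, forcing $s = \Omega(k) = \Omega(n/(\beta(G)c^2))$ by Theorem~\ref{thm:comm}.

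The graph construction $G_{X,Y}$ will live on $\Theta(n)$ vertices and split its edge set into an $X$-determined piece and a $Y$-determined piece, compatible with vertex arrival order: Alice first arrives a block of vertices carrying only $X$-edges (among themselves), and Bob then arrives a block of vertices whose incident edges are a function of $Y$ together with Alice's already-revealed vertex identities. Concretely, I will place $k$ signal vertices $v_1,\dots,v_k$, a background set of vertices tuned to pin the total $\beta$ at the prescribed level $\Theta(\beta(G))$, and two hub/gadget components, one on Alice's side and one on Bob's side. For each $i\in X$, Alice attaches $v_i$ to her hub; for each $i\in Y$, Bob attaches $v_i$ to a dense Bob-side gadget. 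Parameters (the size of the background, the size of Bob's gadget, and the degree of the hubs) are tuned so that the contribution of the gadget associated to coordinate $i$ to $\beta$ is $\Theta(1)$ when at most one of Alice and Bob selects $i$, but collapses to $o(1/c^2)$ when both do. Hence a single intersection triggers a macroscopic drop in $\beta(G_{X,Y})$, giving the required $c^2$-factor gap.

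The main obstacle is precisely the design of this amplified gadget: a naive encoding (e.g.\ an Alice-hub and a Bob-hub both sending a single edge to $v_i$) changes $\beta$ by only $O(1)$ per intersection, which is insufficient for a multiplicative gap of $c^2$. The construction must therefore amplify a single coincidence so that an entire block of auxiliary vertices changes its degree class simultaneously; this is what forces the $c^2$ (rather than $c$) factor in the final bound and what ties the lower bound to the prescribed value of $\beta(G)$. Once the gadget is in place, the remaining steps are standard: verify that $\beta(G_{X,Y}) = \Theta(\beta(G))$ in both cases, check that the intended stream order is consistent with vertex arrival so Alice's prefix depends only on $X$ and Bob's suffix only on $Y$, translate the streaming algorithm $\mathcal{A}$ into an $(s+O(1))$-bit protocol, and invoke Theorem~\ref{thm:comm} to conclude $s = \Omega(n/(\beta(G)c^2))$.
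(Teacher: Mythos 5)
Your high-level plan is the right one and matches the paper's: reduce $\textsc{DISJ}_k$ with $k = \Theta(n/(\beta c^2))$ to $c$-approximating $\beta(G)$ in vertex-arrival order, have Alice feed an $X$-dependent prefix and Bob a $Y$-dependent suffix separated by a single $s$-bit message, threshold the output, and invoke Theorem~\ref{thm:comm}. You also correctly identify the crux: a single intersection must shift $\beta$ by a \emph{multiplicative} $c^2$, so one coincidence has to flip an entire block of auxiliary vertices between degree classes, not just one vertex. However, you explicitly stop short of constructing that gadget, and it is precisely the part that cannot be waved away as ``standard.'' Without it, the reduction does not exist.

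Two concrete points. First, the direction of your encoding is awkward: you arrange for an intersection to \emph{lower} $\beta$ from $\Theta(\beta(G))$ down to $o(\beta(G)/c^2)$. With a single coordinate intersecting, you would need a $\Theta(1)$-size event (index $i$ being in both $X$ and $Y$) to \emph{delete} a $(1-1/c^2)$-fraction of the total $\beta$-mass, which forces the background/baseline mass to be tiny and the per-coordinate mass enormous; it is not clear this can coexist with having $k = \Theta(n/(\beta c^2))$ independent coordinates. The natural construction goes the other way: an intersection \emph{raises} $\beta$. Second, here is the gadget you are missing. Give every coordinate $i$ a block $U_i$ of $q = 2zc^2$ vertices. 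Add two cliques $A,B$ each of size $a = kq$ (joined into one clique of size $2a$), and $z$ permanently-isolated vertices $C$. Alice connects $U_i$ to all of $A$ iff $i \notin X$; Bob connects $U_i$ to all of $B$ iff $i \notin Y$. Then $U_i$ is isolated exactly when $i \in X\cap Y$; otherwise every vertex of $U_i$ has degree at least $a$, so all non-isolated $U_i$'s together contribute at most $kq/a = 1$ to $\beta$, and $A\cup B$ contributes at most $1$. Hence $\beta \le z+2$ when $X\cap Y=\emptyset$, while $\beta \ge z + q = z(2c^2+1)$ when $X\cap Y\ne\emptyset$, a ratio exceeding $c^2$ for $z\ge 2$. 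The vertex-arrival split works because Alice can arrive $A,C$, and all $U_i$ (only $E_0\cup E_X$ edges so far), then Bob arrives $B$. With $n = \Theta(kzc^2)$ and $\beta = \Theta(z)$, this yields $s = \Omega(k) = \Omega(n/(\beta c^2))$. This block-amplification step, and in particular making $|U_i| = \Theta(zc^2)$ so that one intersection overwhelms the $\Theta(z)$ baseline, is exactly the missing idea your proposal flags but does not supply.
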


\begin{proof}


Let $\textsc{Alg}_{c,n}$ be any streaming algorithm which takes as input a vertex arrival stream of an $n$-vertex graph $G$ and returns a $c$-approximation of $\beta(G)$ with probability $\frac{2}{3}$.

Suppose we are given an instance of $\textsc{DISJ}_{k}$. We will construct a
graph $G$ from $X$ and $Y$ which we can use to tell whether $X \cap Y = \emptyset$ by checking a
$c$-approximation of $\beta(G)$.

Let $z \ge 2$ be an arbitrary integer. Set $q = 2 z c^2$ and $a = kq$. Let $G = (V, E)$, where $V$ is partitioned into disjoint subsets
$A$, $B$, $C$, and $U_i$ for $i \in [k]$. These are of size $|A| = |B| = a$, $|C| = z$, and $|U_i| = q$. So $n := |V| = k q + 2a + z = 3kq + z = z(6kc^2 + 1)$.
Thus, $k \in \Theta(\frac{n}{z c^2})$ holds.

First consider the set of edges $E_0$ consisting of all $\lbrace u, v \rbrace$ with $u, v \in A \cup B$, $u \neq v$.
Setting $E = E_0$ makes $A \cup B$ a clique, while all other vertices remain isolated.

Figure \ref{fig:initial} shows this initial configuration. For clarity, we represent the structure using super-nodes and super-vertices.
A super-node is a subset of $V$ (in this case we use $A$, $B$, $C$, and each $U_i$). Between the super-nodes, we have super-edges representing the
existence of all possible edges between constituent vertices. So a super-edge between super-nodes $Z_1$ and $Z_2$ represents that
$\lbrace z_1, z_2 \rbrace \in E$ for every $z_1 \in Z_1$ and $z_2 \in Z_2$. The lack of a super-edge between $Z_1$ and $Z_2$ indicates that
none of these $\lbrace z_1, z_2 \rbrace$ are in $E$.

Now we add dependence on $X$ and $Y$. Let
$$E_X = \bigcup_{i \in [n] \setminus X} \left( \bigcup_{u \in U_i, v \in A} \lbrace\lbrace u, v\rbrace\rbrace \right) \text{ and } E_Y = \bigcup_{i \in [n] \setminus Y} \left( \bigcup_{u \in U_i, v \in B} \lbrace\lbrace u, v\rbrace\rbrace \right) \text{.}$$
So $E_X$ contains all edges from vertices in $U_i$ to vertices in $A$ exactly when index $i$ is not in the set $X$. Similarly for $E_Y$ with $B$, and $Y$.

Now let $E = E_0 \cup E_X \cup E_Y$. Adding these edge sets corresponds to
adding a super-edge to figure \ref{fig:initial} between $U_i$ and $A$
(or $B$) whenever $i$ is not in $X$ (or $Y$). Figures
\ref{fig:example1} and \ref{fig:example2} illustrate this.
In Figure~\ref{fig:example1}, the intersection is non-empty, which
creates a set of isolated nodes that push up the value of $\beta(G)$.
Meanwhile, there is no intersection in Figure~\ref{fig:example2}, so
the only isolated nodes are those in $C$.

Now, consider $\beta(G)$.
In the case where $X \cap Y = \emptyset$, we will have a super-edge connecting each $U_i$ to at least one of $A$ and $B$, so the degree
of each vertex in each $U_i$ is either $a$ or $2a$. 
Similarly, $A \cup B$ is a clique, so each vertex has degree at least $(2a-1)$. There are $2a$ such vertices, so they contribute
at most $\frac{2a}{(2a - 1) + 1} = 1$ to $\beta$. Vertices in $C$ are isolated and contribute exactly $z$ to $\beta$. Therefore, $z \leq \beta(G) \leq \frac{kq}{a} + 1 + z = z + 2$.

Now consider the case where $X \cap Y \neq \emptyset$. This means that there exists some $i \in X \cap Y$, and so $U_i$ will have no super-edges.
So each vertex in $U_i$ is isolated, and contributes exactly $1$ to $\beta$. There are $q$ such vertices, and also accounting for the contribution
of vertices $C$, we obtain $\beta(G) \geq q + z = z(2 c^2 + 1)$. 

Since the minimum possible ratio of the $\beta$-values between graphs in the two cases is at least $\frac{z(2 c^2+1)}{z+2} > c^2$ (using $z \geq 2$),
a $c$-approximation algorithm for $\beta(G)$ would allow us to distinguish between the two cases.


Now, return to our instance of $\textsc{DISJ}_k$. We can have Alice initialise an instance of $\textsc{ALG}_{c,n}$ and have all vertices in $A$, $C$, and each
$U_i$ arrive in any order. This only requires knowledge of $X$ because only edges in $E_0$ and $E_X$ are between these vertices and these are the only edges
that will be added so far in the vertex arrival model. Alice then communicates the state of $\textsc{ALG}_{c,n}$ to Bob. Bob can now have all vertices in $B$
arrive in any order. This only requires knowledge of $Y$ because only edges in $E_0$ and $E_Y$ are still to be added. Bob can then compute a
$c$-approximation of $\beta(G)$ with probability at least $\frac{2}{3}$, determining which case we are in and solving $\textsc{DISJ}_k$.

From Theorem~\ref{thm:comm}, we know that Alice and Bob must have communicated at least $\Omega(k)$ bits. However, all they communicated was the state of
$\textsc{ALG}_{c,n}$. Therefore,
$\Omega(k) = \Omega(\frac{n}{z c^2})$ bits was being used by $\textsc{ALG}_{c,n}$ at the time.

Consider again the graph $G$. The above argument shows that
in order to compute a $c$-approximation to $\beta(G)$, space $\Omega(\frac{n}{z c^2})$ is needed. Since $\beta(G) \geq z$ in both cases, we obtain the
space bound $\Omega(\frac{n}{\beta(G) c^2})$. Last, recall that $z$ and thus $\beta(G)$ can be chosen arbitrarily. The theorem hence holds for any value of $\beta(G)$.
\end{proof}

\begin{figure}
  \centering
    \resizebox{\textwidth}{!}{
    \begin{subfigure}[t]{0.325\textwidth}
      \begin{framed}
          \centering
          \definecolor{myred}{RGB}{217,185,185}
\definecolor{myblue}{RGB}{185,185,217}
\definecolor{mygreen}{RGB}{185,217,185}
\definecolor{mygrey}{RGB}{185,185,185}

\begin{tikzpicture}[thick,
  enode/.style={draw,circle,thick},
  unode/.style={enode,fill=myred},
  anode/.style={enode,fill=myblue},
  bnode/.style={enode,fill=mygreen},
  cnode/.style={enode,fill=mygrey},
]

\begin{scope}[start chain=going below,node distance=5pt]
    \foreach \i in {1,2}
        \node[unode,on chain] (u\i) [label=left: $U_{\i}$] {\anticlique};
    \node[on chain,minimum height=30pt] (dots) {$\vdots$};
    \foreach \i in {k-1,k}
        \node[unode,on chain] (u\i) [label=left: $U_{\i}$] {\anticlique};
\end{scope}

\begin{scope}[xshift=40pt,yshift=-20pt,start chain=going below,node distance=10pt]
  \node[anode,on chain] (a) [label=right: A] {\clique};
  \node[bnode,on chain] (b) [label=right: B] {\clique};
  \node[cnode,on chain] (c) [label=right: C] {\anticlique};
\end{scope}

\begin{scope}[every edge/.style={draw=black,very thick},
    every loop/.style={}]
    \path [-] (a) edge (b);
\end{scope}
\end{tikzpicture}
      \end{framed}
      \caption{Initial configuration.\label{fig:initial}}
    \end{subfigure}
    \begin{subfigure}[t]{0.325\textwidth}
      \begin{framed}
          \centering
          \definecolor{myred}{RGB}{217,185,185}
\definecolor{myblue}{RGB}{185,185,217}
\definecolor{mygreen}{RGB}{185,217,185}
\definecolor{mygrey}{RGB}{185,185,185}

\begin{tikzpicture}[thick,
  enode/.style={draw,circle,thick},
  unode/.style={enode,fill=myred},
  anode/.style={enode,fill=myblue},
  bnode/.style={enode,fill=mygreen},
  cnode/.style={enode,fill=mygrey},
]

\begin{scope}[start chain=going below,node distance=5pt]
    \foreach \i in {1,2,...,5}
        \node[unode,on chain] (u\i) [label=left: $U_{\i}$] {\anticlique};
\end{scope}

\begin{scope}[xshift=50pt,yshift=-20pt,start chain=going below,node distance=10pt]
  \node[anode,on chain] (a) [label=right: A] {\clique};
  \node[bnode,on chain] (b) [label=right: B] {\clique};
  \node[cnode,on chain] (c) [label=right: C] {\anticlique};
\end{scope}

\begin{scope}[every edge/.style={draw=black,very thick},
    every loop/.style={}]
    \path [-] (a) edge (b);
    \path [-] (u1) edge (a);
    \path [-] (u3) edge (a);
    \path [-] (u5) edge (a);
    \path [-] (u4) edge (b);
    \path [-] (u5) edge (b);
\end{scope}
\end{tikzpicture}
      \end{framed}
      \caption{Example with $X = \lbrace 2, 4\rbrace$ and $Y = \lbrace 1, 2, 3 \rbrace$.\label{fig:example1}}
    \end{subfigure}
    \begin{subfigure}[t]{0.325\textwidth}
      \begin{framed}
          \centering
          \definecolor{myred}{RGB}{217,185,185}
\definecolor{myblue}{RGB}{185,185,217}
\definecolor{mygreen}{RGB}{185,217,185}
\definecolor{mygrey}{RGB}{185,185,185}

\begin{tikzpicture}[thick,
  enode/.style={draw,circle,thick},
  unode/.style={enode,fill=myred},
  anode/.style={enode,fill=myblue},
  bnode/.style={enode,fill=mygreen},
  cnode/.style={enode,fill=mygrey},
]

\begin{scope}[start chain=going below,node distance=5pt]
    \foreach \i in {1,2,...,5}
        \node[unode,on chain] (u\i) [label=left: $U_{\i}$] {\anticlique};
\end{scope}

\begin{scope}[xshift=50pt,yshift=-20pt,start chain=going below,node distance=10pt]
  \node[anode,on chain] (a) [label=right: A] {\clique};
  \node[bnode,on chain] (b) [label=right: B] {\clique};
  \node[cnode,on chain] (c) [label=right: C] {\anticlique};
\end{scope}

\begin{scope}[every edge/.style={draw=black,very thick},
    every loop/.style={}]
    \path [-] (a) edge (b);
    \path [-] (u1) edge (a);
    \path [-] (u3) edge (a);
    \path [-] (u5) edge (a);
    \path [-] (u2) edge (b);
    \path [-] (u4) edge (b);
    \path [-] (u5) edge (b);
\end{scope}
\end{tikzpicture}
      \end{framed}
      \caption{Example with $X = \lbrace 2, 4\rbrace$ and $Y = \lbrace 1, 3 \rbrace$.\label{fig:example2}}
    \end{subfigure}
  }
\end{figure}


\bibliography{cdk17}


\end{document}